\documentclass[a4paper,onecolumn,accepted=2026-04-02]{quantumarticle}
\pdfoutput=1

\let\originalleft\left
\let\originalright\right
\renewcommand{\left}{\mathopen{}\mathclose\bgroup\originalleft}
\renewcommand{\right}{\aftergroup\egroup\originalright}

\usepackage{url}
\usepackage{amsmath, amssymb, amsthm}
\usepackage{enumerate,enumitem}
\usepackage{autobreak, bbm, caption, comment, graphicx, mathtools, sidecap, tikz, tikzsymbols, thmtools, thm-restate, verbatim, xcolor, xspace}

\usepackage[numbers,sort&compress]{natbib}

\usepackage[backref=page]{hyperref}
\hypersetup{colorlinks, allcolors=blue, linktocpage, breaklinks}
\usepackage[capitalize]{cleveref}

\allowdisplaybreaks

\theoremstyle{plain}
\newtheorem{thm}{Theorem}[section]
\newtheorem{lem}[thm]{Lemma}
\newtheorem{cor}[thm]{Corollary}

\newtheorem{obs}[thm]{Observation}
\newtheorem{clm}[thm]{Claim}

\crefname{lem}{Lemma}{Lemmas}
\crefname{thm}{Theorem}{Theorems}
\crefname{cor}{Corollary}{Corollaries}
\crefname{prp}{Proposition}{Propositions}

\theoremstyle{definition}
\newtheorem{dfn}[thm]{Definition}
\newtheorem{que}[thm]{Question}

\crefname{que}{Question}{Questions}

\newcommand*{\C}{\ensuremath{\mathbb{C}}}

\newcommand*{\bra}[1]{\ensuremath{\langle #1|}}
\newcommand*{\ket}[1]{\ensuremath{|#1\rangle}}

\newcommand*{\bigket}[1]{\left|#1\right\rangle}
\newcommand*{\ip}[2]{\langle #1 | #2 \rangle}
\newcommand*{\ketbra}[2]{|#1\rangle\!\langle #2|}
\newcommand*{\kb}[1]{\ketbra{#1}{#1}}

\newcommand*{\acz}{$\mathsf{AC^0}$\xspace}

\newcommand*{\qaczf}{$\mathsf{QAC_f^0}$\xspace}
\newcommand*{\qacf}{$\mathsf{QAC_f}$\xspace}
\newcommand*{\qnc}{$\mathsf{QNC}$\xspace}

\newcommand{\Paren}[1]{\left(#1\right)}
\newcommand*{\Mag}[1]{\left| #1 \right|}

\newcommand*{\adj}[1]{#1^\dagger}

\newcommand*{\bits}{\{0,1\}}
\newcommand*{\eps}{\varepsilon}
\newcommand*{\E}{\mathbb{E}}

\newcommand*{\Ind}[1]{\mathbbm1_{#1}}
\newcommand*{\norm}[1]{\|#1\|}
\newcommand*{\Norm}[1]{\left\|#1\right\|}
\newcommand*{\poly}{\mathrm{poly}}
\newcommand*{\zs}{0\dotsc0}
\DeclarePairedDelimiter{\ceil}{\lceil}{\rceil}

\newcommand*\ot[1]{\tilde O \left( #1 \right)}
\newcommand*\oo[1]{O \left( #1 \right)}
\newcommand*\cube[1]{\bits^{#1}}
\newcommand*\es\epsilon
\newcommand*\reg\mathsf

\newcommand*\fip[2]{\left\langle #1, #2 \right\rangle}
\newcommand*\tr[1]{\operatorname{tr} \left( #1 \right)}

\newcommand*{\prz}{\mathrm{Pr}}
\newcommand*{\pr}[1]{\mathrm{Pr}(#1)}

\newcommand*{\PRs}[2]{\mathrm{Pr}_{#1}\left(#2\right)}

\usetikzlibrary{arrows, calc, positioning, shapes}

\newcommand*{\chan}[1]{\mathcal C_{#1}}
\newcommand*{\mrm}[1]{\mathrm{#1}}
\newcommand{\Brac}[1]{\left[#1\right]}
\begin{document}
	
\title{Query and Depth Upper Bounds for Quantum Unitaries via Grover Search}
\author{Gregory Rosenthal}
\email{grosenth@uwaterloo.ca}
\affiliation{Institute for Quantum Computing, University of Waterloo}

\begin{abstract}
	We prove that any $n$-qubit unitary can be implemented (i) approximately in time $\ot{2^{n/2}}$ with query access to an appropriate classical oracle, and also (ii) exactly by a circuit of depth $\ot{2^{n/2}}$ with one- and two-qubit gates and $2^{O(n)}$ ancillae. The proofs involve similar reductions to Grover search. The proof of (ii) also involves a linear-depth construction of arbitrary quantum states using one- and two-qubit gates (in fact, this can be improved to constant depth with the addition of fanout and generalized Toffoli gates) which may be of independent interest. We also prove a matching $\Omega\Paren{2^{n/2}}$ lower bound for (i) and (ii) for a certain class of implementations.
\end{abstract}

\maketitle

\section{Introduction}\label{s:intro}
This paper addresses two seemingly disparate questions in quantum circuit complexity via a common proof technique. The first of these questions is as follows:

\begin{que}[The unitary synthesis problem~\cite{Aar16,AK07}] \label{q:usp}
	Is there a polynomial-time quantum algorithm $A$ such that for every unitary $U$, there exists a classical oracle $f$ such that $A^f$ approximately implements $U$?
\end{que}

By $A^f$ we mean $A$ with query access to the Boolean function $f$. Note that \cref{q:usp} is concerned with the \emph{overall} runtime rather than just the number of queries. \cref{q:usp} was posed by Aaronson and Kuperberg~\cite{AK07} and named the ``unitary synthesis problem" by Aaronson~\cite{Aar16}. An affirmative answer would imply that to obtain a small quantum circuit for a unitary $U$, it suffices to give an efficient algorithm for computing the oracle $f$, which is interesting because more is known about Boolean function complexity than is known about quantum circuit complexity. Aaronson~\cite{Aar21} and Lombardi, Ma and Wright~\cite{LMW23} discuss this motivation in the context of certain physically motivated unitaries and quantum cryptography respectively.

We also consider the following question:

\begin{que}\label{q:depth}
	Given $n$, what is the minimum circuit depth required to exactly implement a worst-case $n$-qubit unitary using one- and two-qubit gates?
\end{que}

The depth of a circuit is the number of layers of gates in it. Circuit depth corresponds to parallel computation time, and quantum circuits of higher depth are believed to be more difficult to physically implement. An upper bound for the unitary synthesis problem implies a similar depth upper bound for \emph{approximately} implementing worst-case unitaries, because any $n$-bit function can be computed by a circuit of depth $O(n)$ (e.g.\ a CNF or DNF), but otherwise \cref{q:usp,q:depth} are not obviously related.

We prove $\ot{2^{n/2}}$ upper bounds for both \cref{q:usp,q:depth}, which improve on the previously best known upper bounds by a constant factor in the exponent. This is discussed in more detail in \cref{s:ub-usp,s:ub-ldqc} respectively. Both upper bounds are proved using a reduction from the task of implementing a unitary $U$ to that of implementing what we call a \emph{$U$-column-constructor}, or $U$-CC for short:

\begin{dfn}[$U$-CC] \label{U-qram-def}
	Given an $n$-qubit unitary $U$, call a unitary $A$ acting on $m \ge 2n$ qubits a \emph{$U$-column-constructor ($U$-CC)} if $A \ket{x, 0^{m-n}} = \ket{x} \otimes U\ket{x} \otimes \ket{0^{m-2n}}$ for all $x \in \cube n$.
\end{dfn}

Informally, controlled on an input string $x \in \cube n$, a $U$-CC constructs the corresponding output state $U \ket x$ of $U$ in a separate register;\footnote{Here and throughout this paper, we allow ancillae that start in the all-zeros state and are required to end in the all-zeros state.} if this separate register is not initialized to the all-zeros state then a $U$-CC's behavior is unspecified subject to unitarity. Using a zero-error variant of Grover search we prove the following, where by $C^A$ we mean $C$ with $A$ and $\adj A$ oracles:

\begin{thm} \label{sc-main}
	There is a uniform family $(C_{n,m})_{n,m}$ for $m \ge 2n$ of quantum circuits, each making $\oo{2^{n/2}}$ queries to an $m$-qubit quantum oracle, such that for all $n$-qubit unitaries $U$ and all $m$-qubit $U$-CCs $A$ it holds that $C_{n,m}^A$ implements $U$.
\end{thm}

To see why \cref{sc-main} is nontrivial, suppose that we wish to apply a unitary $U$ on the input state $\sum_{x \in \cube n} \alpha_x \ket x$. A natural first step is to query a $U$-CC to obtain the state $\sum_{x \in \cube n} \alpha_x \ket{x} \otimes U\ket x$. But now to obtain $U \sum_{x \in \cube n} \alpha_x \ket x$, it is necessary to uncompute $\ket x$ in superposition controlled on $U \ket x$.

The analogous ``state synthesis problem" for constructing quantum states using a classical oracle has polynomial-time solutions~\cite{Aar16,Ros24}, and in this paper we prove that any $n$-qubit state can be constructed in $O(n)$ depth using $\ot{2^n}$ ancillae. These upper bounds for constructing states generalize to implementing $U$-CCs, and plugging these implementations of $U$-CCs into the algorithm from \cref{sc-main} yields $\ot{2^{n/2}}$ upper bounds for \cref{q:usp,q:depth}.

Finally, we prove a matching $\Omega\Paren{2^{n/2}}$ query lower bound for \cref{sc-main} when $U$ is Haar random and the $U$-CC is defined appropriately, implying that new techniques are needed to make further progress on \cref{q:usp,q:depth}. This is discussed further in \cref{s:ilb}.

Below we state our results informally; more precise statements will be given in subsequent sections.
\subsection{Upper bound for the unitary synthesis problem} \label{s:ub-usp}

We prove the following:

\begin{restatable}{thm}{rclass} \label{class}
	There is an $\ot{2^{n/2}}$-time quantum algorithm $A$ such that for every $n$-qubit unitary $U$, there exists a classical oracle $f$ such that $A^f$ approximately implements $U$ to within exponentially small error.
\end{restatable}

\cref{class} is the first nontrivial upper or lower bound for the unitary synthesis problem. For comparison, every $n$-qubit unitary can be implemented using $\ot{2^{2n}}$ one- and two-qubit gates~\cite[Section 4.5]{NC10}, and hence by the Solovay-Kitaev theorem~\cite{DN06} can be approximated to within exponentially small error using $\ot{2^{2n}}$ gates from a finite gate set. Thus the trivial algorithm in which the oracle encodes the description of a circuit for approximating $U$ runs in time $\ot{2^{2n}}$. Irani, Natarajan, Nirkhe, Rao and Yuen~\cite[Section 7.2]{INN+22} also observed that \emph{with postselection} there is a polynomial-time solution to the unitary synthesis problem, using the Choi--Jamiołkowski isomorphism and quantum teleportation. Subsequently to our work, Lombardi et al.~\cite{LMW23} proved that there is no one-query solution to the unitary synthesis problem using $o(2^n)$ qubits.

\subsection{Circuit depth upper bounds for states and unitaries} \label{s:ub-ldqc}

For brevity we assign the following name to the ``standard" quantum circuit model:

\begin{dfn}[\qnc circuits]
	A \qnc circuit is a quantum circuit consisting of one- and two-qubit gates.
\end{dfn}

We prove the following:

\begin{restatable}{thm}{rdepth} \label{depth}
	Every $n$-qubit unitary can be implemented by a \qnc circuit of depth $\ot{2^{n/2}}$ with $\ot{2^{2n}}$ ancillae.
\end{restatable}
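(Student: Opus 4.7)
My plan is to invoke \cref{sc-main} (specifically its generalization \cref{main}, which permits $U$-qRAMs acting on more than $2n$ qubits) and thereby reduce the theorem to constructing a $U$-qRAM realizable by a shallow \qnc circuit. Since \cref{sc-main} implements $U$ using $O(2^{n/2})$ qRAM queries plus $\ot{2^{n/2}}$ additional one- and two-qubit gates, it suffices to build a $U$-qRAM of depth $\poly(n)$ using $\ot{2^{2n}}$ ancillae: the total depth then becomes $\poly(n) \cdot 2^{n/2} + \ot{2^{n/2}} = \ot{2^{n/2}}$, and the ancilla budget is preserved.

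The key auxiliary tool, advertised in the abstract and which I assume as a separate lemma, is a $\ot n$-depth arbitrary state-preparation primitive: every $n$-qubit state $\ket \psi$ can be produced from $\ket{0^n}$ by a \qnc circuit of depth $\ot n$ using $\ot{2^n}$ ancillae that are all returned to $\ket 0$. For each $y \in \cube n$, let $V_y$ denote such a circuit that prepares the column $U\ket y$.

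I then build the $U$-qRAM in three stages. \emph{Preparation:} allocate $2^n$ output buffers $R_y$ together with private state-prep ancilla registers $A_y$ and, in parallel over all $y \in \cube n$, apply $V_y$ on $R_y \otimes A_y$; this has depth $\ot n$ and uses $\ot{2^{2n}}$ ancillae in total, leaving $R_y = U\ket y$ and $A_y = \ket 0$. \emph{Selection:} swap $R_x$ into the output register $O$ using a $\poly(n)$-depth controlled-swap tree keyed by $\ket x$, so that afterwards $O = U\ket x$, $R_x = \ket{0^n}$, and $R_y = U\ket y$ for $y \ne x$. \emph{Uncomputation:} decode $\ket x$ into indicator qubits $b_y = \Ind{y = x}$ (depth $\poly(n)$, $O(2^n)$ ancillae) and then, in parallel over $y$, apply $V_y^\dagger$ on $R_y \otimes A_y$ controlled on $\neg b_y$; these operations act on disjoint targets, and since adding a single control qubit to a depth-$D$ \qnc circuit preserves depth up to a constant factor (via fanout of the control), each conditional $V_y^\dagger$ retains depth $\ot n$. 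This resets every $R_y$ with $y \ne x$ to $\ket{0^n}$ while leaving $R_x$ (already $\ket{0^n}$) untouched; finally, the $b_y$'s are cleared by reversing the decoder.

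The main obstacle will be proving the $\ot n$-depth state-preparation lemma itself, which the abstract identifies as a result of independent interest: an arbitrary $n$-qubit state carries $2^n - 1$ independent complex parameters, so reaching depth polylogarithmic in the Hilbert-space dimension intrinsically requires $\ot{2^n}$ ancillae working in parallel, and it is the careful uncomputation of these ancillae that is the delicate technical step. Everything else in the $U$-qRAM---fan-out and decoder trees, controlled-swap trees, and lifting a known \qnc circuit to a singly-controlled version---is standard circuit engineering with straightforward depth and ancilla accounting.
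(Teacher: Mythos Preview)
Your high-level strategy---reduce to \cref{main} and build a $U$-qRAM of depth $\poly(n)$ on $\ot{2^{2n}}$ ancillae by running the shallow state-preparation primitive (\cref{state-qnc}) in parallel for every column of $U$---is exactly the paper's, and your accounting for depth and ancillae is correct. Where you differ is in how the qRAM is assembled. You prepare \emph{all} $2^n$ columns unconditionally, swap out the correct one through a controlled-swap tree, and then run $V_y^\dagger$ controlled on $[y\ne x]$ to erase the remaining buffers. The paper instead runs the state-prep circuit \emph{controlled on $[x=y]$} from the start, so every buffer except $R_x$ stays at $\ket{0^n}$; it then copies $R_x$ into the output register by a bitwise OR across all buffers (valid precisely because the others are zero) and erases $R_x$ by XORing the output back in, again controlled on $[x=y]$. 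Both constructions ultimately need to control the state-prep circuit on a single bit---you control the uncomputation, the paper controls the computation---so neither avoids that subroutine; the paper's OR/XOR extraction is a bit slicker than your swap-up/swap-down tree, but both are $\poly(n)$ depth and yield identical final bounds.
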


Sun, Tian, Yang, Yuan and Zhang~\cite{Sun+21} proved that every $n$-qubit unitary can be implemented by a \qnc circuit of depth $\ot{2^n}$ with $\oo{2^n}$ ancillae, compared to which the circuit from \cref{depth} has lower depth but more ancillae. More generally, Sun et al.~\cite{Sun+21} proved that for $m \le 2^n$, any $n$-qubit unitary can be implemented by a \qnc circuit of size (i.e.\ number of gates) $\oo{4^n}$ and depth $\ot{4^n/m}$ with $m$ ancillae. In followup work, Yuan and Zhang~\cite{YZ23} generalized our proof of \cref{depth} to show that for $2^n \le m \le 4^n$, any $n$-qubit unitary can be implemented by a \qnc circuit of depth $\ot{2^{3n/2} m^{-1/2}}$ with $m$ ancillae; when $m = 4^n$ this matches \cref{depth} up to $\poly(n)$ factors.

The size of a circuit is trivially at most its depth times number of qubits acted on, so the circuit from \cref{depth} has size $\ot{2^{2.5 n}}$. This raises the following question:

\begin{que}
	Can every $n$-qubit unitary be implemented by a \emph{single} \qnc circuit that is \emph{both} of size $\ot{4^n}$ and depth $\ot{2^{n/2}}$?
\end{que}

Our proof of \cref{depth} uses a low-depth construction of quantum states which may be of independent interest, and of which we first state the following corollary:

\begin{cor} \label{state-qnc}
	Every $n$-qubit state can be constructed by a \qnc circuit of depth $O(n)$ with $\ot{2^n}$ ancillae.
\end{cor}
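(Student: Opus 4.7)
The plan is to prepare $\ket\psi = \sum_{x \in \cube n} \alpha_x \ket x$ in three phases: first build the ``one-hot'' superposition $\sum_x \alpha_x \ket{e_x}$ on $2^n$ ancillae (where $\ket{e_x}$ is the standard basis state with a single $1$ at position $x$ and $0$'s elsewhere), then coherently copy the position of the $1$ into the $n$-qubit output register in binary, and finally uncompute the one-hot ancillae.

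For the first phase, arrange $\oo{2^n}$ ancillae as the nodes of a complete binary tree of depth $n$, initialize the root to $\ket 1$ and all other nodes to $\ket 0$, and at each level $k = 0, \ldots, n-1$ apply in parallel at every level-$k$ node $v$ a constant-depth three-qubit subcircuit mapping $\ket 1_v \ket{00}_{\text{children}(v)} \mapsto \ket 0_v (\cos\theta_v \ket{10} + \sin\theta_v \ket{01})_{\text{children}(v)}$. The angles $\theta_v$ are precomputed classically from $(|\alpha_x|)_x$ so that the leaves end up in $\sum_x |\alpha_x| \ket{e_x}$, and a final parallel layer of single-qubit phase gates on the leaves installs the phases $\arg \alpha_x$. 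Since operations at a single level act on disjoint qubits, this phase has depth $\oo n$.

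For the second phase, note that output qubit $i$ should equal $x_i$, which (since exactly one leaf is $\ket 1$ in each branch of the superposition) equals the XOR of the leaf qubits at positions $x$ with $x_i = 1$. I compute these $n$ XORs in parallel using disjoint balanced binary CNOT trees with fresh ancillae, write each tree's root into the corresponding output qubit, and uncompute the intermediate nodes in reverse; total cost is depth $\oo n$ and $\ot{2^n}$ ancillae.

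The state is now $\sum_x \alpha_x \ket{e_x}_A \ket x_{\text{out}}$. For the third phase I zero out $A$ by applying, for each $x \in \cube n$, an $n$-controlled $X$ gate on qubit $A_x$ with controls the output qubits polarized by the bits of $x$. These $2^n$ gates share controls, so I first fan out each output qubit to $2^n$ copies via a CNOT tree (depth $\oo n$, $\ot{2^n}$ ancillae), apply the $2^n$ multi-controlled $X$ gates in parallel on disjoint targets using the standard depth-$\oo n$ Toffoli decomposition, and uncompute the fanout. Total depth is $\oo n$ with $\ot{2^n}$ ancillae. The main obstacle is the bookkeeping needed to verify that composing these parallel subcircuits preserves both bounds, particularly in the final uncomputation step, which applies exponentially many multiply-controlled gates simultaneously.
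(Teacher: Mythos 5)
Your construction is correct, but it takes a genuinely different route from the paper. The paper derives \cref{state-qnc} from \cref{state-qacf}: it first builds a \emph{constant}-depth \qaczf circuit by preparing the unentangled product $\bigotimes_x \ket{\phi_x}$ of conditional-amplitude qubits indexed by all prefixes $x \in \cube{<n}$, computing the selection function $f$ via a single $\ot{2^n}$-size DNF with large Toffoli and fanout gates, and then uncomputing the registers $\reg R_x$ controlled on the output; the $O(n)$ depth only appears at the end, when each $k$-ary gate is expanded into a depth-$O(\log k)$ \qnc circuit. You instead give a direct \qnc construction through a one-hot encoding: a depth-$n$ binary tree of controlled rotations producing $\sum_x \alpha_x \ket{e_x}$ (your angles $\theta_v$ encode exactly the same conditional amplitudes $\beta$ as the paper's $\ket{\phi_x}$, just consumed sequentially down the tree rather than in one parallel shot), followed by a unary-to-binary conversion via parity trees and a controlled cleanup of the one-hot register. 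Your approach is more elementary and self-contained, and is close in spirit to the constructions of Sun et al.\ and Zhang--Li--Yuan cited in \cref{s:ub-ldqc}; the paper's detour through \qaczf is what additionally yields \cref{state-qacf} (constant depth over the larger gate set) and the controlled variant needed for the $U$-qRAM in \cref{depth}, neither of which falls out of your inherently level-sequential tree. Two bookkeeping points you should make explicit: the $n$ parity trees in your second phase and the $2^n$ multi-controlled gates in your third phase share control qubits (the leaves, resp.\ the output register), so you must either fan these out first or stagger the layers --- both of which you essentially note, and both of which stay within depth $O(n)$ and $\oo{n 2^n} = \ot{2^n}$ ancillae; and the three-qubit gate at node $v$ must be specified to fix $\ket{0}_v\ket{00}$ so that it acts correctly on nodes the ``token'' has not reached.
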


Sun et al.~\cite{Sun+21} and Zhang, Li and Yuan~\cite{ZLY22} independently proved \cref{state-qnc}, respectively shortly before and shortly after we did, and with just $\oo{2^n}$ ancillae and $\oo{2^n}$ size. Yuan and Zhang~\cite{YZ23} proved in followup work that every $n$-qubit state can be constructed by a \qnc circuit of size $\oo{2^n}$ and depth $\oo{n + \frac {2^n} {n+m}}$ using $m \ge 0$ ancillae, and that these size and depth upper bounds are tight for all $n,m$. This improves on a slightly weaker tradeoff of Sun et al.~\cite{Sun+21}, and the proof of Yuan and Zhang's~\cite{YZ23} upper bound cites ideas from our proof of \cref{state-qnc}.

\cref{state-qnc} follows from a \emph{constant}-depth construction of quantum states over a larger gate set. The following class was defined by Green, Homer, Moore and Pollett~\cite{Gre+02}:

\begin{restatable}[\qaczf~\cite{Gre+02}]{dfn}{rdefqacz}
	A \qacf circuit is a quantum circuit consisting of arbitrary one-qubit gates, as well as \emph{generalized Toffoli gates} of arbitrary arity defined by
	\begin{equation*}
		\ket{b,x} \mapsto \bigket{b \oplus \prod_{j=1}^n x_j, x} \quad \text{for} \quad b \in \bits, x = (x_1, \dotsc, x_n) \in \cube n,
	\end{equation*}
	and \emph{fanout gates} of arbitrary arity defined by
	\begin{equation*}
		\ket{b,x} \mapsto \ket{b, x \oplus b^n} \quad \text{for} \quad b \in \bits, x \in \cube n.
	\end{equation*}
	A \qaczf circuit is a constant-depth \qacf circuit.
\end{restatable}

Analogously to in classical circuit complexity, one motivation for studying restricted quantum circuit classes such as \qaczf circuits is that they seem potentially easier to prove lower bounds against than general \qnc circuits. \qaczf circuits can trivially simulate \acz circuits~\cite{Gre+02} (i.e.\ constant-depth Boolean circuits with NOT gates and unbounded-fanin AND and OR gates), and in fact are even more powerful than their classical counterparts, because polynomial-size \qaczf circuits can also compute the majority function~\cite{HS05,TT16} whereas \acz circuits require exponential size to do so~\cite{Has86,Juk12}. We prove the following:

\begin{restatable}{thm}{rstateqacf} \label{state-qacf}
	Every $n$-qubit state can be constructed by a \qaczf circuit with $\ot{2^n}$ ancillae.
\end{restatable}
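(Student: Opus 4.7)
I would build the \qaczf circuit for $\ket\psi = \sum_{x \in \cube n} \alpha_x \ket x$ in two stages. Stage one, a Hadamard layer (depth $1$) followed by a constant-depth diagonal unitary, prepares the uniform-modulus state $\frac{1}{\sqrt{2^n}} \sum_x e^{i \arg \alpha_x} \ket x$. The diagonal unitary is implemented by classically computing $\arg \alpha_x$ into an ancilla register via a fanout-plus-Toffoli table lookup, applying a controlled phase, and uncomputing---all in constant depth with $\ot{2^n}$ ancillae. This handles all the ``phase information'' of $\ket\psi$ and reduces the task to that of correcting the moduli $|\alpha_x|$.

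Stage two corrects the moduli via a gate-teleportation-based multiplexed rotation. I would prepare in depth $1$ a bank of magic states indexed by $x \in \cube n$, each being the Choi state of a one-qubit rotation $R(\theta_x)$ with $\theta_x = 2 \arcsin(\lambda |\alpha_x| \sqrt{2^n})$ for a small $\lambda$; then use fanout and Toffoli gates to route the magic state addressed by the data register to a fresh ancilla qubit; then consume the routed magic state by a Bell measurement, applying the Pauli corrections via a final constant-depth classical layer. Post-selection on the ancilla being $\ket 0$ yields $\ket\psi$ with probability $\lambda^2$. To boost to exact preparation without breaking the constant-depth budget, I would run $\ot{1/\lambda^2} = \ot{2^n}$ independent copies in parallel and apply a constant-depth \qaczf ``coherent selection'' gadget: identify the first successful copy (an \acz computation, done in constant depth in \qaczf) and route it to the output via a multi-controlled SWAP built from fanout and Toffoli.

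The main obstacle is ensuring \emph{exact} (not just approximate) correctness of the final coherent selection. Routing one successful branch to the output while fully unentangling from the exponentially many failed branches is delicate when everything must happen in constant depth, since the failed branches retain small but nonzero amplitude that must be either projected out or absorbed by a careful uncomputation. I expect the cleanest resolution to combine the magic-state teleportation of stage two with a fixed-point-style parallel amplification, exploiting \qaczf's dual ability to simulate \acz-type classical logic via fanout and to perform depth-efficient multi-qubit conditional operations via unbounded-fanin Toffoli, so that any residual entanglement can be absorbed by a final cleanup layer. The ancilla overhead is dominated by the $2^n$ magic states together with polylogarithmic-precision ancillae per magic state and per parallel copy, giving $\ot{2^n}$ in total.
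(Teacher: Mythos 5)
There is a genuine gap, and it sits exactly where you flag ``the main obstacle'': your construction is inherently probabilistic, but \cref{state-qacf} (per the paper's definition of ``constructs'') demands an \emph{exact} unitary map $\ket{0^m} \mapsto \ket\psi \otimes \ket{0^{m-n}}$. Post-selecting the moduli-filtering ancilla succeeds only with probability $\lambda^2$, and running $\ot{1/\lambda^2}$ independent copies in parallel leaves an all-copies-fail branch of nonzero amplitude; no constant-depth (indeed, no finite-depth) unitary ``cleanup layer'' can remove that branch, and fixed-point amplification does not reach error zero in finitely many rounds. Even restricted to the successful branches, the output remains entangled with \emph{which} copy succeeded and with the post-measurement states of the failed copies, and you give no mechanism for uncomputing this exactly. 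Two further problems compound this: gate teleportation of the non-Clifford rotations $R(\theta_x)$ produces byproduct operators that are not Paulis (the correction is itself a rotation, requiring adaptivity that a single ``constant-depth classical layer'' cannot supply), and the multiplexed routing of one magic state out of a bank of $2^n$ leaves the address register entangled with which slot of the bank was vacated, which is additional garbage you never uncompute.

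The paper avoids all of this by being deterministic from the start. It factors $\alpha_x = \prod_{i=1}^n \beta_{x_{\le i}}$ into conditional amplitudes along the binary tree of prefixes, prepares the \emph{product} state $\bigotimes_{x \in \cube{<n}} (\beta_{x0}\ket0 + \beta_{x1}\ket1)$ in depth one, and then applies a single reversible routing map $U_f$ (where $f(x)_i = x_{f(x)_{<i}}$ is computable by a $\ot{2^n}$-size DNF, hence by a \qaczf circuit) to collect the correct amplitude $\alpha_t$ onto basis state $\ket t$; the leftover one-qubit registers are then rotated back to $\ket0$ \emph{exactly}, controlled on $t$, because in each branch their states are known pure one-qubit states. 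If you want to salvage your approach, the part worth keeping is the table-lookup diagonal-phase idea; but the moduli must be handled by an exact reversible decomposition of this kind rather than by filtering and repetition.
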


We will see that \qnc circuits of logarithmic depth can efficiently simulate \qaczf circuits, so \cref{state-qacf} implies \cref{state-qnc}.

It is well known that every function from $n$ bits to one bit can be computed by a DeMorgan circuit of depth $O(n)$ and size $\oo{2^n/n}$, and that for most functions this upper bound is tight~\cite{Juk12,Lup58,Sha49}.\footnote{The $\Omega(n)$ depth lower bound follows from the $\Omega\Paren{2^n/n}$ size lower bound, because any Boolean circuit of depth $d$ has size less than $2^d$.} The above results can be seen as progress toward analogous statements about the quantum circuit complexity of constructing quantum states and implementing unitary transformations.

\subsection{Lower bound for implementing $U$ given a $U$-CC} \label{s:ilb}

We prove the following, where by $C^A$ we mean $C$ with $A$ and $\adj A$ oracles:

\begin{thm} \label{sc-lb}
	For all sequences of quantum circuits $(C_n)_n$ making $o\Paren{2^{n/2}}$ queries to a $2n$-qubit quantum oracle, with probability $1-o(1)$ over a Haar random $n$-qubit unitary $U$, there exists a $2n$-qubit $U$-CC $A$ such that $C_n^A$ is (in some sense) almost maximally far from implementing $U$.
\end{thm}

\cref{sc-lb} matches the upper bound from \cref{sc-main}. Since any $U$-CC tensored with the identity is also a $U$-CC, we may replace ``$2n$-qubit" with ``$m$-qubit" in \cref{sc-lb} for any $m \ge 2n$. However, some restrictions on $A$ are still necessary for a lower bound such as \cref{sc-lb} to hold, at least if we allow $A$ to act on more than $2n$ qubits. For example, the unitary $A$ defined by
\begin{equation*}
	\forall x,y \in \cube n, b \in \bits: \, A \ket{x,y,b} =
	\begin{cases}
		\ket{x} \otimes U \ket{x \oplus y} \otimes \ket0 &\text{if } b=0 \\
		U\ket{x} \otimes \ket{y} \otimes \ket1 &\text{if } b=1
	\end{cases}
\end{equation*}
is a $U$-CC, and can trivially be used to implement $U$ when applied with $b=1$.

It is well known that unstructured search on a list of length $N$ requires $\Omega(\sqrt N)$ quantum queries~\cite{NC10}, but this does not immediately imply that \cref{sc-main} is tight, since there also exist algorithms that do not simulate unstructured search. As we will explain more precisely when we prove \cref{sc-lb}, it also takes $\Omega(\sqrt N)$ quantum queries to compute $\sigma^{-1}(1)$ given query access to a permutation $\sigma$ of $\{1, \dotsc, N\}$~\cite{Amb02,Nay11}, and this almost immediately implies an $\Omega\Paren{2^{n/2}}$ lower bound for \cref{sc-main} when $U$ is a permutation matrix and $A$ is defined appropriately. However this example is unsatisfying if our ultimate goal is to prove lower bounds for \cref{q:usp,q:depth}, since $n$-qubit permutation matrices can be efficiently synthesized with a classical oracle and also implemented by a \qnc circuit of depth $O(n)$.\footnote{On input $x$, first compute $\sigma(x)$ (by querying the oracle, or by simulating an appropriate Boolean circuit, depending on the model of computation), and then uncompute $x$ given $\sigma(x)$ by running a similar procedure in reverse.} In contrast, if \emph{any} family of unitaries is hard to implement in the sense of \cref{q:usp,q:depth}, then Haar random unitaries are also hard to implement for the following reason:

\begin{obs} \label{wcac}
	Any fixed unitary $U$ can be written as $U = UR \cdot \adj R$ pointwise where $R$ is Haar random, so since $UR$ and $\adj R$ are also Haar random, the task of implementing $U$ reduces to that of successively implementing two (dependent) Haar random unitaries.
\end{obs}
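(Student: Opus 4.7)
The plan is to verify the observation by combining the trivial algebraic identity $U = (UR) \cdot R^\dagger$ with two standard invariance properties of the Haar measure on the unitary group, and then to spell out what ``reduces'' means here.

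First I would recall that the Haar measure on $n$-qubit unitaries is the unique probability measure invariant under both left and right multiplication by any fixed unitary; in particular, for any fixed $U$, the map $R \mapsto UR$ pushes the Haar measure to itself, so $UR$ is Haar random whenever $R$ is. Next I would verify that $R^\dagger = R^{-1}$ is also Haar random: the pushforward of the Haar measure under the inverse map $V \mapsto V^{-1}$ is again a left-invariant (in fact bi-invariant) probability measure, and by uniqueness of the Haar measure it must coincide with it. These two facts, together with the pointwise identity $U = (UR) R^\dagger$, establish that $U$ factors as a product of two marginally Haar-random unitaries.

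To turn this factorization into a genuine reduction, I would argue as follows. Suppose some algorithm or circuit construction $\mathcal C$ (say, a classical-oracle algorithm as in \cref{class}, or a \qnc compilation as in \cref{depth}) implements a Haar random $n$-qubit unitary with probability at least $1-\delta$ over the random choice of unitary. Then to implement a fixed $U$ one samples $R$ from the Haar measure (or a sufficiently fine $\epsilon$-net, together with an efficient description), sets $V_1 := R^\dagger$ and $V_2 := UR$, invokes $\mathcal C$ on each of $V_1$ and $V_2$ separately, and composes the outputs. By the invariance arguments above each of $V_1$ and $V_2$ is marginally Haar random, so a union bound yields overall success probability $\ge 1 - 2\delta$, and the composition implements $V_2 V_1 = U$.

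The only subtle point, and the closest thing to an obstacle, is that $V_1$ and $V_2$ are not independent: they share the randomness $R$. However, a worst-case-to-average-case statement only needs control of the marginal distributions, which is exactly what the left-invariance and inverse-invariance of the Haar measure provide, so independence is unnecessary and the union bound suffices. Consequently, a hardness result for a single fixed $U$ in any of the senses considered in the paper implies an average-case hardness result for Haar random unitaries, up to a factor of two in the failure probability.
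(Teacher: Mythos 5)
Your proposal is correct and follows essentially the same reasoning the paper intends for this observation: the pointwise identity $U = (UR)\adj R$, the invariance of the Haar measure under left multiplication and under inversion (so each factor is marginally Haar random), and a union bound over the two marginals, which is unaffected by the dependence between $UR$ and $\adj R$. The paper states this as a one-line observation without further proof, so your filled-in details match its intent exactly.
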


This reduction, along with the $U$-CCs from our proofs of \cref{class,depth}, shows that an improved upper bound for \cref{sc-main} in the case where $U$ is Haar random would imply improved upper bounds for \cref{q:usp,q:depth} in the general case. (Provided that in this hypothetical improvement to \cref{sc-main}, the complexity of the non-query operations is not too large.) However, \cref{sc-lb} rules out this approach.

\cref{sc-lb} does, however, leave open the possibility of obtaining a tighter upper bound for \cref{q:depth,q:usp} by reducing to some \emph{other} ``generalized CC":

\begin{que} \label{op:alt-usp}
	Is there a sequence $(C_n)_n$ of quantum circuits, each making $o\Paren{2^{n/2}}$ queries to a $(p(n) + q(n))$-qubit quantum oracle where $p(n), q(n) = \poly(n)$, such that for all $n$-qubit unitaries $U$ there exists a family of $q(n)$-qubit states $\Psi = (\ket{\psi_x})_{x \in \cube{p(n)}}$ such that for all ``$\Psi$-CCs" $A$ (i.e.\ $A\ket{x, \zs} = \ket{x} \ket{\psi_x}$) it holds that $C_n^A$ implements $U$?
\end{que}

If $\Psi = (\ket{f(x)})_{x \in \cube{p(n)}}$ for some function $f: \cube{p(n)} \to \bits$ then a $\Psi$-CC computes $f$, so \cref{op:alt-usp} is essentially a rephrasing of the unitary synthesis problem with a more modest runtime requirement. However, this rephrasing suggests an approach to proving lower bounds for the unitary synthesis problem by considering increasingly general classes of state families $\Psi$.

At a high level, we prove \cref{sc-lb} by using \cref{wcac} to reduce to the previously mentioned lower bound for the case where $U$ is a permutation matrix.

\subsection{Organization and preliminaries}

In \cref{sec:qram} we prove upper and lower bounds for implementing $U$ given query access to a $U$-CC and its inverse. In \cref{s:class} we prove an upper bound for the unitary synthesis problem. Finally in \cref{sec:depth} we prove circuit depth upper bounds for constructing states and implementing unitaries.

We denote the $n$-qubit identity transformation by $I_n$, or $I$ when $n$ is implicit. We write $\norm\cdot$ to denote the 2-norm of a vector or the operator norm of a matrix.

\section{Bounds for implementing $U$ given a $U$-CC} \label{sec:qram}
In \cref{s:main,s:lb} we prove upper and lower bounds respectively on the complexity of implementing a unitary $U$, given query access to a $U$-CC and its inverse. First we define the quantum query model more precisely. By a \emph{quantum circuit making $k$ queries to an $n$-qubit quantum oracle}, we mean a circuit of the form $C = C_k Q_k C_{k-1} Q_{k-1} \dotsb C_0$ where each $C_j$ is a unitary and each $Q_j$ is a placeholder for either a ``forward" or ``backward" query. For an $n$-qubit unitary $A$, by $C^A$ we mean the unitary defined by substituting $A$ and $\adj A$ respectively for the forward and backward queries in $C$. Claims about the quantum circuit complexity of $C$ are in reference to the circuit $C_k C_{k-1} \dotsb C_0$ defined by removing the queries from $C$. Let $\adj C = \adj C_0 \adj Q_1 \adj C_1 \adj Q_2 \dotsb \adj C_k$, where the ``conjugate transpose" of the forward query symbol is the backward query symbol and vice versa, and note that $\Paren{\adj C}^A = \adj{\Paren{C^A}}$.
\subsection{Upper bound} \label{s:main}

We will use a variant of Grover search that finds the marked string with certainty rather than just with high probability:

\begin{restatable}{lem}{regs} \label{egs}
	There is a uniform sequence of \qacf circuits $(G_n)_n$---each of depth $\oo{2^{n/2}}$, making $\oo{2^{n/2}}$ queries, and acting on $n+1$ qubits---such that for all $x \in \cube n$ it holds that $G_n^{I - 2\kb{x,1}} \ket{0^{n+1}} = \ket{x, 0}$.
\end{restatable}

Imre and Bal\'azs~\cite{IB05} survey several proofs of the query upper bound from \cref{egs} in detail, and below we include an alternate proof of \cref{egs} due to Wiebe~\cite{Wie21a}.

\begin{proof}
	Let $x \in \cube n$ denote the marked string that we wish to find, and let
	\begin{align*}
		&t = \left\lceil \frac\pi4 2^{n/2} \right\rceil,
		&\theta = \frac{\pi/2} {2t+1},
		&&p = 2^n \sin^2 \theta.
	\end{align*}
	Since $p \le 2^n \theta^2 \le 2^n \left( \frac{\pi/2} {2t} \right)^2 \le 1$ we can define states
	\begin{align*}
		\ket{\psi_0} = \ket{+^n} \otimes \left( \sqrt{1-p} \ket0 + \sqrt{p} \ket1 \right),
		\qquad
		\ket{\psi_t} = \left( (2\kb{\psi_0} - I) (I - 2\kb{x,1}) \right)^t \ket{\psi_0}
	\end{align*}
	where $\ket+ = \frac{\ket0 + \ket1}{\sqrt 2}$. Since $\ip{x,1}{\psi_0} = 2^{-n/2} \sqrt{p} = \sin\theta$, we may write $\ket{\psi_0} = \cos\theta \ket\alpha + \sin\theta \ket\beta$ where $\ket\beta = \ket{x,1}$ and $\ket\alpha$ is a superposition of standard basis states besides $\ket{x,1}$. In this basis, 
	\begin{align*}
		\ket{\psi_t}
		&= \Paren{\begin{pmatrix} 2 \cos^2\theta - 1 & 2 \cos\theta \sin\theta \\ 2 \cos\theta \sin\theta & 2 \sin^2\theta - 1 \end{pmatrix} \begin{pmatrix} 1 & 0 \\ 0 & -1 \end{pmatrix}}^t \begin{pmatrix} \cos\theta \\ \sin\theta \end{pmatrix}
		= \begin{pmatrix} \cos 2\theta & -\sin 2\theta \\ \sin 2\theta & \cos 2\theta \end{pmatrix}^t \begin{pmatrix} \cos\theta \\ \sin\theta \end{pmatrix} \\
		&= \begin{pmatrix} \cos ((2t+1) \theta) \\ \sin ((2t+1) \theta) \end{pmatrix}
		= \begin{pmatrix} \cos (\pi/2) \\ \sin (\pi/2) \end{pmatrix}
		= \begin{pmatrix} 0 \\ 1 \end{pmatrix},
	\end{align*}
	i.e.\ $\ket{\psi_t} = \ket{x,1}$.
\end{proof}

Now we reduce the task of implementing a unitary $U$ to that of implementing a $U$-CC:

\begin{thm}[formal version of \cref{sc-main}] \label{main}
	There is a uniform family of \qacf circuits $(C_{n,m})_{n,m}$ for $m \ge 2n$---each of depth $\oo{2^{n/2}}$, making $\oo{2^{n/2}}$ queries to an $m$-qubit quantum oracle, and acting on $O(m)$ qubits---such that for all $n$-qubit unitaries $U$ and all $m$-qubit $U$-CCs $A$ it holds that $C_{n,m}^A (I_n \otimes \ket\zs) = U \otimes \ket\zs$.
\end{thm}

\begin{proof}
	By linearity we may assume that the input is a string $x \in \cube n$; our goal is to output $U\ket{x} \otimes \ket\zs$. First apply $A$, yielding $\ket{x} \otimes U\ket{x} \otimes \ket\zs$. The challenge now is to uncompute $\ket x$.
	
	Let
	\begin{equation*}
		C = (A \otimes I_1) \left( I_n \otimes \left(I_{m-n+1} - 2 \kb{0^{m-n}, 1} \right) \right) \left(\adj A \otimes I_1\right),
	\end{equation*}
	and observe that $C$ can be implemented by a \qaczf circuit making two queries. By the definition of $A$ we have that
	\begin{align*}
		C &= (A \otimes I_1) \left(I_{m+1} - 2\sum_{\mathclap{y \in \cube n}} \kb{y} \otimes \kb{0^{m-n}} \otimes \kb1 \right) \left( \adj A \otimes I_1 \right) \\
		&= I_{m+1} - 2\sum_{\mathclap{y \in \cube n}} \kb{y} \otimes U \kb{y} \adj U \otimes \kb{0^{m-2n}} \otimes \kb1 \\
		&= I_{m+1} - 2\sum_{\mathclap{y \in \cube n}} \kb{y, 1} \otimes U \kb{y} \adj U \otimes \kb{0^{m-2n}},
	\end{align*}
	where in the last line we reorder the qubits for future convenience. Therefore
	\begin{equation*}
		C \Paren{I_{n+1} \otimes U \ket{x} \otimes \ket{0^{m-2n}}}
		= (I_{n+1} - 2\kb{x,1}) \otimes U\ket x \otimes \ket{0^{m-2n}},
	\end{equation*}
	so using our copy of $U\ket{x}$, the circuit $C$ can implement the reflection $I_{n+1} - 2\kb{x,1}$ in a disjoint register without disturbing the copy of $U\ket{x}$.
	
	We could therefore simulate the circuit $G_n$ from \cref{egs}, with queries to $I - 2\kb{x,1}$ answered in this manner, to construct a copy of $\ket x$. Instead perform this simulation \emph{in reverse}, to \emph{uncompute} the existing copy of $\ket{x}$ while preserving the copy of $U\ket{x}$. Finally swap $U\ket{x}$ into the appropriate register.
\end{proof}

\subsection{Lower bound} \label{s:lb}

For linear transformations $L, M$ from $n$ qubits to $m$ qubits where $n \le m$ let $\fip L M = 2^{-n} \tr{\adj L M}$, i.e.\ $\fip \cdot \cdot$ is the Frobenius inner product normalized such that $\fip A A = 1$ for all isometries $A$.

\begin{thm}[formal version of \cref{sc-lb}] \label{lb}
	For all sequences of quantum circuits $(C_n)_n$ making $o\Paren{2^{n/2}}$ queries to a $2n$-qubit quantum oracle, with probability $1-o(1)$ over a Haar random $n$-qubit unitary $U$, there exists a $2n$-qubit $U$-CC $A$ such that for all states $\ket\psi$,
	\begin{equation*}
		\Mag{\fip {C_n^A (I_n \otimes \ket\zs)} {U \otimes \ket\psi}} \le o(1).
	\end{equation*}
\end{thm}

\begin{proof}
	For a permutation $\sigma$ of $\cube n$ let $A_\sigma$ be the unitary defined by $A_\sigma \ket{x, y} = \ket{x, y \oplus \sigma(x)}$ for all $x,y \in \cube n$. Nayak~\cite[Corollary 1.2]{Nay11} proved that any quantum circuit making $o \Paren{2^{n/2}}$ queries to $A_\sigma$ outputs $\sigma^{-1} \Paren{0^n}$ with probability less than $1/2$, where the probability is over a uniform random permutation $\sigma$ of $\cube n$ as well as the randomness of the output measurement. (We remark that Ambainis~\cite{Amb02} previously proved a similar result using different techniques.)
	
	Let $\eps, \delta > 0$ be universal constants, and assume for the sake of contradiction that there exists a quantum circuit $C$ making $o\Paren{2^{n/2}}$ queries to a $2n$-qubit quantum oracle, such that with probability at least $\eps$ over a Haar random $n$-qubit unitary $U$, for all $2n$-qubit $U$-CCs $A$, there exists a state $\ket\psi$ such that $\Mag{\fip {C^A (I_n \otimes \ket\zs)} {U \otimes \ket\psi}} \ge \delta$. We will prove that there exists a quantum circuit making $o\Paren{2^{n/2}}$ queries to $A_\sigma$ that outputs $\sigma^{-1} \Paren{0^n}$ with probability $\Omega(1)$, where the probability is over a uniform random permutation $\sigma$ of $\cube n$ as well as the randomness of the output measurement. By executing this circuit constantly many times until it outputs $\sigma^{-1} \Paren{0^n}$, we can boost the success probability to be greater than 1/2 which contradicts Nayak's result. Therefore no such circuit $C$ exists.
	
	Write $C = C_s Q_s C_{s-1} Q_{s-1} \dotsb C_0$ for $s = o\Paren{2^{n/2}}$, where each $C_i$ is a unitary and each $Q_i$ is a placeholder for either a forward or backward query. For an $n$-qubit unitary $R$, define a quantum circuit $C_R$ by replacing each forward query $Q_i$ in $C$ with $(I_n \otimes R) Q_i$, and replacing each backward query $Q_i$ in $C$ with $Q_i \Paren{I_n \otimes \adj R}$. For a permutation $\sigma$ of $\cube n$ let $P_\sigma$ denote the corresponding permutation matrix on $n$ qubits, i.e.\ $P_\sigma \ket{x} = \ket{\sigma(x)}$ for all $x \in \cube n$. Clearly for all $R,\sigma$ it holds that $C_R^{A_\sigma} = C^{(I_n \otimes R) A_\sigma}$, and that $(I_n \otimes R) A_\sigma$ is a $2n$-qubit $RP_\sigma$-CC. If $\sigma$ is fixed and $R$ is Haar random, then $RP_\sigma$ is also Haar random, so for some state $\ket\psi$ depending on $R$ and $\sigma$,
	\begin{equation*}
		\PRs R {\Mag{\fip { C_R^{A_\sigma} (I_n \otimes \ket\zs)} {R P_\sigma \otimes \ket\psi}} \ge \delta} \ge \eps.
	\end{equation*}
	
	Call a fixed unitary $R$ ``good with respect to $\sigma$" if $\Mag{\fip { C_R^{A_\sigma} (I_n \otimes \ket\zs)} {R P_\sigma \otimes \ket\psi}} \ge \delta$ for some $\ket\psi$. Also let $D_R = \adj C_R (R \otimes \ket\psi)$. (For intuition, if $R$ is good with respect to $\sigma$ then $C_R^{A_\sigma}$ approximately implements $RP_\sigma$, and so $D_R$ approximately implements $P_{\sigma^{-1}}$.) If $R$ is good with respect to $\sigma$ then
	\begin{align*}
		\delta
		&\le \Mag{2^{-n} \tr{\left( I_n \otimes \bra\zs \right) \adj {\Paren{C_R^{A_\sigma}}} \left(RP_\sigma \otimes \ket\psi\right)}} &\text{(definition of $\fip \cdot \cdot$)} \\
		&= \Mag{2^{-n} \sum_{\mathclap{x \in \cube n}} \bra{x,\zs} \adj {\Paren{C_R^{A_\sigma}}} \left(RP_\sigma \ket{x} \otimes \ket\psi \right)} &\text{(definition of trace)} \\
		&\le 2^{-n} \sum_{\mathclap{x \in \cube n}} \left| \bra{x,\zs} D_R^{A_\sigma} \ket{\sigma(x)} \right| &\text{(triangle ineq., definitions of $D_R, P_\sigma$)} \\
		&\le 2^{-n} \sum_{\mathclap{x \in \cube n}} \Norm{\Paren{\bra{\sigma^{-1}(x)} \otimes I} D_R^{A_\sigma} \ket{x}} &\text{(Cauchy-Schwarz, $x \gets \sigma^{-1}(x)$)}.
	\end{align*}
	For $x \in \cube n$ let $p_{\sigma, R, x}$ be the probability that if we run $D_R^{A_\sigma}$ on input $x$ and measure the first $n$ qubits of the output state, then the result is $\sigma^{-1}(x)$. Then we can phrase the above inequality as $\delta \le 2^{-n} \sum_{x \in \cube n} \sqrt{p_{\sigma, R, x}}$, and by Cauchy-Schwarz it follows that $\delta^2 \le 2^{-n} \sum_{x \in \cube n} p_{\sigma, R, x}$.
	
	Therefore for every fixed permutation $\sigma$ of $\cube n$, for Haar random $R$ and uniform random $x \in \cube n$, it holds that
	\begin{equation*}
		\eps \le \underset{R}\prz \left( \text{$R$ is good w.r.t.\ $\sigma$} \right)
		\le \underset{R}\prz \left( \delta^2 \le \underset{x} \E \left[ p_{\sigma, R, x} \right] \right)
		\le \delta^{-2} \, \underset{\mathclap{R,x}}\E \left[ p_{\sigma, R, x} \right]
	\end{equation*}
	where the last step is by Markov's inequality. If we also take $\sigma$ to be uniform random then $\E_{\sigma, R, x} \left[ p_{\sigma, R, x} \right] \ge \eps \delta^2$, so there exist \emph{fixed} values of $R$ and $x$ such that $\E_\sigma \left[ p_{\sigma, R, x} \right] \ge \eps \delta^2$. Thus there exists a quantum circuit (specifically $D_R^{A_\sigma} \ket x$ for these fixed values of $R$ and $x$) making $o \left( 2^{n/2} \right)$ queries to $A_\sigma$ that outputs $\sigma^{-1}(x)$ with probability at least $\eps \delta^2$, where the probability is over a uniform random permutation $\sigma$ of $\cube n$ as well as the randomness of the output measurement. By symmetry such a circuit exists with $x = 0^n$ as desired.
\end{proof}

\section{Upper bound for the unitary synthesis problem} \label{s:class}

We prove the following:

\begin{thm}[formal version of \cref{class}] \label{thm:usp}
	Let $\eps(n) = \exp(-\poly(n))$. Then there is a uniform sequence of \qacf circuits $(C_n)_n$---each of depth $\oo{2^{n/2}}$, making $\oo{2^{n/2}}$ queries, and with $\poly(n)$ ancillae---such that for all $n$-qubit unitaries $U$ there exists a classical oracle $f$ such that $\Norm{C_n^f (I_n \otimes \ket\zs) - U \otimes \ket\zs} \le \eps(n)$.
\end{thm}

Although \cref{thm:usp} is stated in terms of \qacf circuits for convenience, a similar statement for \qnc circuits follows easily using \cref{lem:qsim}. A similar statement also holds for diamond norm error: For an isometry $A$ define a channel $\chan A$ by $\chan A(\rho) = A \rho \adj A$. For all states $\ket\psi$ and $\ket\phi$ it holds that $\frac12 \Norm{\kb\psi - \kb\phi}_1 \le \Norm{\ket\psi - \ket\phi}$~\cite[Eq.\ 2.3]{Ros24}, so for all isometries $A$ and $B$ it holds that
\begin{align}
	\frac12 \Norm{\chan A - \chan B}_\diamond
	&= \frac12 \max_{\ket\psi} \Norm{((\chan A - \chan B) \otimes I) (\kb\psi)}_1 \notag\\
	&= \frac12 \max_{\ket\psi} \Norm{(A \otimes I) \kb\psi \Paren{\adj A \otimes I} - (B \otimes I) \kb\psi \Paren{\adj B \otimes I}}_1 \notag\\
	&\le \max_{\ket\psi} \Norm{(A \otimes I)\ket\psi - (B \otimes I)\ket\psi} \notag\\
	&= \Norm{(A - B) \otimes I} \notag\\
	&= \norm{A - B}. \label{eq:dist-comp}
\end{align}
In particular, if we define a channel $\chan n^f$ by tracing out the ancillae after applying $C_n^f (I_n \otimes \ket\zs)$, then
\begin{equation*}
	\frac12 \Norm{\chan n^f - \chan U}_\diamond
	\le \frac12 \Norm{\chan{C_n^f (I_n \otimes \ket\zs)} - \chan{U \otimes \ket\zs}}_\diamond
	\le \eps(n),
\end{equation*}
where the first inequality holds because tracing out qubits at the end cannot increase the diamond norm distance between two channels, and the second inequality is by \cref{eq:dist-comp,thm:usp}.

Queries to a classical oracle (i.e.\ a Boolean function) can be modeled in either of two standard ways. In the first, a function $f: \cube n \to \cube m$ is encoded as the oracle $U_f$ defined by $U_f \ket{x, y} = \ket{x, y \oplus f(x)}$. In the second, which is only applicable when $m=1$, the function $f$ is instead encoded as the oracle $V_f$ defined by $V_f \ket{x} = (-1)^{f(x)} \ket{x}$. These models are equivalent, because $V_f = (I_n \otimes \bra-) U_f (I_n \otimes \ket-)$ where $\ket- = \frac{\ket0 - \ket1}{\sqrt2}$, and if $g(x,y) = \bigoplus_{j=1}^m f(x)_j y_j$ (where the subscript $j$ indicates the $j$-th bit of an $m$-bit string) then $U_f = (I_n \otimes H^{\otimes m}) V_g (I_n \otimes H^{\otimes m})$ where $H$ denotes the Hadamard gate. We write $C^f$ to abbreviate $C^{U_f}$ or $C^{V_f}$, as defined in the beginning of \cref{sec:qram}; since $U_f$ and $V_f$ are Hermitian we do not need to distinguish between forward and backward queries.

Our proof uses the following result:\footnote{An earlier version of our proof used a similar result of Aaronson~\cite[Proposition 3.3.5]{Aar16} that required $O(n)$ queries, resulting in a multiplicative $O(n)$ blowup in the query complexity in \cref{thm:usp}.}

\begin{thm}[{Rosenthal~\cite{Ros24}}] \label{thm:fqa}
	Let $\eps(n) = \exp(-\poly(n))$. Then there is a uniform sequence of $\poly(n)$-qubit \qaczf circuits $(C_n)_n$, each making four queries, such that for all $n$-qubit states $\ket\psi$ there exists a classical oracle $f$ such that $\Norm{C_n^f \ket\zs - \ket\psi \ket\zs} \le \eps(n)$.
\end{thm}

Our proof also uses the following lemma, which says that if the task of implementing an isometry $J$ reduces to that of implementing an isometry $A$, then the task of \emph{approximately} implementing $J$ reduces to that of \emph{approximately} implementing $A$:

\begin{lem} \label{lem:err}
	Let $C$ be an $(m+a)$-qubit quantum circuit making $k$ queries to an $n$-qubit quantum oracle, and let $J$ be an isometry from $m$ qubits to $m+a$ qubits. Assume there exists a subspace $S \subseteq \Paren{\C^2}^{\otimes n}$ and an isometry $A: S \to \Paren{\C^2}^{\otimes n}$ such that for all $n$-qubit unitaries $U$ consistent with $A$ it holds that $C^U (I_m \otimes \ket{0^a}) = J$. Then for all isometries $B: S \to \Paren{\C^2}^{\otimes n}$ and all $n$-qubit unitaries $V$ consistent with $B$, it holds that $\Norm{C^{V} (I_m \otimes \ket{0^a}) - J} \le \sqrt 2 \cdot k \norm{A - B}$.
\end{lem}

First we prove \cref{thm:usp} assuming \cref{lem:err}, and then we prove \cref{lem:err}.

\begin{proof}[Proof of \cref{thm:usp}]
	\cref{thm:fqa} generalizes from constructing states to implementing $\poly(n)$-qubit $U$-CCs (for an $n$-qubit unitary $U$), because if $f_x$ is the oracle associated with constructing $U \ket x$ in \cref{thm:fqa}, then the function $(x,y) \mapsto f_x(y)$ can simulate queries to $f_x$ controlled on $x$. Therefore there is a uniform sequence of $\poly(n)$-qubit \qaczf circuits $(A_n)_n$, each making four queries, such that for all $n$-qubit unitaries $U$ there exists a classical oracle $f$ such that
	\begin{equation*}
		\max_{x \in \cube n} \Norm{A_n^f \ket{x, 0^{n+m}} - \ket{x} \otimes U \ket{x} \otimes \ket{0^m}} \le \eps(n) / \Paren{c2^n \cdot \sqrt 2}
	\end{equation*}
	for $m = \poly(n)$. Here $c$ is a constant such that the circuit in \cref{main} makes at most $c 2^{n/2}$ queries. Since the operator norm of a matrix is at most the Frobenius norm, it follows that
	\begin{align*}
		\Norm{A_n^f \Paren{I_n \otimes \ket{0^{n+m}}} - \sum_{\mathclap{x \in \cube n}} \kb{x} \otimes U \ket{x} \otimes \ket{0^m}}
		&\le \sqrt{\sum_{x \in \cube n} \Norm{A_n^f \ket{x, 0^{n+m}} - \ket{x} \otimes U \ket{x} \otimes \ket{0^m}}^2} \\
		&\le \eps(n) / \Paren{c 2^{n/2} \cdot \sqrt 2}.
	\end{align*}
	The result then follows by \cref{main,lem:err}, the latter applied with isometries $J = U \otimes \ket\zs$ and $A$ a $U$-CC and $B = A_n^f \Paren{I_n \otimes \ket{0^{n+m}}}$.
\end{proof}
\subsection{Proof of \texorpdfstring{\cref{lem:err}}{Lemma 3.3}}

We use the fact that
\begin{equation} \label{eq:nc-err}
	\Norm{U_m U_{m-1} \dotsb U_1 - V_m V_{m-1} \dotsb V_1} \le \sum_{j=1}^m \Norm{U_j - V_j}
\end{equation}
for all unitaries $U_j, V_j$ by the triangle inequality~\cite[Eq.\ 4.69]{NC10}. (The reason that \cref{lem:err} does not trivially follow from \cref{eq:nc-err}, and without the $\sqrt2$ factor, is that some of the states acted on by applications of $V$ in $C^V (I_m \otimes \ket{0^a})$ might be far from $S$.)

For a unitary $U \in \C^{n \times n}$ and a subspace $S \subseteq \C^n$, we write $U_{\mid S}$ to denote the isometry from $S$ to $\C^n$ defined by restricting the domain of $U$ to $S$. The following inequality is trivially tight to within a $\sqrt2$ factor:

\begin{clm} \label{clm:err}
	For all unitaries $V \in \C^{n \times n}$, subspaces $S \subseteq \C^n$, and isometries $W: S \to \C^n$, there exists a unitary $U \in \C^{n \times n}$ such that $U_{\mid S} = W$ and $\norm{U - V} \le \sqrt 2 \Norm{W - V_{\mid S}}$.
\end{clm}

\cref{lem:err} follows immediately from \cref{eq:nc-err,clm:err}, along with the fact that $\Norm{\adj U - \adj V} = \norm{U - V}$ (to handle backward queries).

\begin{proof}[Proof of \cref{clm:err}]
	Let $P \Sigma \adj Q$ be a singular value decomposition of $\adj W V_{\mid S}$, and write $\Sigma = \mrm{diag} \Paren{\sigma_1, \dotsc, \sigma_k}$ where $k = \dim(S)$ and $\sigma_1 \ge \dotsb \ge \sigma_k \ge 0$. By definition $P, Q: \C^k \to S$ are (bijective) isometries such that $P \Sigma \adj Q = \adj W V_{\mid S}$. Then
	\begin{align*}
		\Norm{W - V_{\mid S}}
		&\ge \Norm{\Paren{W - V_{\mid S}} Q \ket k}
		\ge \sqrt{2 - 2 \Mag{\bra{k} \adj Q \adj W V_{\mid S} Q \ket k}}
		= \sqrt{2 - 2 \sigma_k \Mag{\bra{k} \adj Q P \ket k}} \\
		&\ge \sqrt{2 - 2\sigma_k},
	\end{align*}
	where the last inequality is by Cauchy-Schwarz.
	
	Define isometries $A, B \in \C^{n \times k}$ by $A = WP$ and $B = V_{\mid S} Q$, and note that $\adj A B = \adj P \adj W V_{\mid S} Q = \Sigma$. Let
	\begin{equation*}
		\overline A = \mrm{trnc}\Brac{(B - A \Sigma) \Paren{I - \Sigma^2}^{-1/2}}, \qquad
		\overline B = \mrm{trnc}\Brac{(B \Sigma - A) \Paren{I - \Sigma^2}^{-1/2}},
	\end{equation*}
	where the ``truncation" operator $\mrm{trnc}[\cdot]$ removes the $j$'th column of a matrix for all $j$ such that $\sigma_j = 1$ (thus avoiding division by zero). It is straightforward to verify that $\overline A, \overline B$ are isometries satisfying $\adj A \overline A = \adj B \overline B = 0$ and $\adj{\overline A} \overline B = \mathrm{trnc} [\Sigma]$, so
	\begin{equation*}
		\Norm{\overline A - \overline B} = \max_{\Norm{\ket\psi} = 1} \Norm{\overline A \ket\psi - \overline B \ket\psi} = \max_{\mathclap{\Norm{\ket\psi} = 1}} \sqrt{2 - 2 \bra\psi \mathrm{trnc} [\Sigma] \ket\psi} = \sqrt{2 - 2\sigma_k} \le \Norm{W - V_{\mid S}}.
	\end{equation*}
	
	Let $S^\prime \subseteq \C^n$ be the subspace that $V$ maps to the image of $\overline B$. The subspaces $S$ and $S^\prime$ are orthogonal, because $\adj B \overline B = 0$ and $V$ maps $S$ to the image of $B$, so we can write $\C^n = S \oplus S^\prime \oplus S^{\prime \prime}$ for some subspace $S^{\prime \prime}$. Let $R: S^\prime \to \C^{\Mag{\{j:\, \sigma_j \neq 1\}}}$ be the (bijective) isometry such that $V_{\mid S^\prime} = \overline B R$, and define a unitary $U \in \C^{n \times n}$ by
	\begin{equation*}
		U_{\mid S} = W, \qquad
		U_{\mid S^\prime} = \overline A R, \qquad
		U_{\mid S^{\prime \prime}} = V_{\mid S^{\prime \prime}}.
	\end{equation*}
	To see that $U$ is in fact unitary, note that $\adj W \cdot \overline A R = P \adj A \overline A R = 0$, and that the image of $U_{\mid S \oplus S^\prime}$ (i.e.\ the span of $A$ and $\overline A$) equals the image of $V_{\mid S \oplus S^\prime}$ (i.e.\ the span of $B$ and $\overline B$). The latter condition holds because for all columns $j$, if $\sigma_j \neq 1$ then $\mrm{span}(A \ket j, \overline A \ket j) = \mrm{span}(A \ket j, B \ket j) = \mrm{span}(B \ket j, \overline B \ket j)$, and if $\sigma_j = 1$ then $A \ket j = B \ket j$ (recalling that $\adj A B = \Sigma$) and $\overline A \ket j, \overline B \ket j$ have not been defined (recalling the definition of $\mrm{trnc}[\cdot]$).
	
	Let $\ket\psi \in \C^n$ be a unit vector such that $\Norm{U - V} = \Norm{(U - V) \ket\psi}$, and for a subspace $T \subseteq \C^n$ let $\ket{\psi_T}$ be the projection of $\ket\psi$ onto $T$. Then by the triangle inequality and Cauchy-Schwarz,
	\begin{align*}
		\Norm{U - V}
		&= \Norm{(U - V) \ket\psi}
		\le \sum_{\mathclap{T \in \{S, S^\prime, S^{\prime \prime}\}}} \Norm{(U - V)_{\mid T} \ket{\psi_T}}
		\le \Norm{W - V_{\mid S}} \cdot \Norm{\ket{\psi_S}} + \Norm{\overline A - \overline B} \cdot \Norm{\ket{\psi_{S^\prime}}} \\
		&\le \Norm{W - V_{\mid S}} \Paren{\Norm{\ket{\psi_S}} + \Norm{\ket{\psi_{S^\prime}}}}
		\le \sqrt 2 \Norm{W - V_{\mid S}} \cdot \Norm{\ket{\psi_{S \oplus S^\prime}}}
		\le \sqrt 2 \Norm{W - V_{\mid S}}. \qedhere
	\end{align*}
\end{proof}

\color{black}

\section{Circuit depth upper bounds for states and unitaries} \label{sec:depth}
In \cref{s:state-qacf,s:u-qacf} respectively we prove circuit depth upper bounds for constructing arbitrary states and implementing arbitrary unitaries.
\subsection{States} \label{s:state-qacf}

We prove the following:

\begin{thm}[formal version of \cref{state-qacf}] \label{thm:qacf-formal}
	For all $n$-qubit states $\ket\psi$ there exists an $\ot{2^n}$-qubit \qaczf circuit $C$ such that $C \ket\zs = \ket\psi \ket\zs$.
\end{thm}

\cref{thm:qacf-formal,lem:qsim} imply the following:

\begin{cor}[formal version of \cref{state-qnc}]
	For all $n$-qubit states $\ket\psi$ there exists an $O(n)$-depth, $\ot{2^n}$-qubit \qnc circuit $C$ such that $C \ket\zs = \ket\psi \ket\zs$.
\end{cor}

A proof sketch of \cref{thm:qacf-formal} is as follows. First consider the analogous problem of sampling a string $s$ from a given distribution over $\cube n$. One way to sample $s$ is to first sample
\begin{equation*}
	b_x \sim \mathrm{Bernoulli} \left( \pr{\text{$s$ begins with $x1$} \mid \text{$s$ begins with $x$}} \right)
\end{equation*}
independently for all binary strings $x$ of length less than $n$, and then output the string $y$ defined by $y_i = b_{y_1 y_2 \dotsb y_{i-1}}$ for $i$ from $1$ to $n$. Furthermore each bit of $y$ can be computed by a DNF formula of size $\ot{2^n}$ as a function of $(b_x)_x$. Similarly we can construct a quantum state $\sum_{y \in \cube n} \alpha_y \ket{y}$ using unentangled one-qubit states in place of $(b_x)_x$; this actually yields a state of the form $\sum_{y \in \cube n} \alpha_y \ket{y} \ket{\mathrm{garbage}_y}$, but it turns out that $\ket{\mathrm{garbage}_y}$ can be efficiently uncomputed controlled on $y$.

Our proof will use the following notation. Let $\cube{\le n}$ (resp.\ $\cube{<n}$) denote the set of strings of length at most (resp.\ less than) $n$ over $\bits$, including the empty string $\es$. For $x \in \cube*$ let $x_k, x_{<k}, x_{\le k}$ respectively denote the $k$-th bit, first $k-1$ bits, and first $k$ bits of $x$, and let $|x|$ denote the length of $x$. For $x,y \in \cube*$ let $xy$ denote the concatenation of $x$ and $y$.

\begin{proof}
	Let $\ket\psi = \sum_{x \in \cube n} \alpha_x \ket x$ denote the $n$-qubit state to be constructed, and define ``conditional amplitudes" $\beta_x$ for $x \in \cube{\le n} \backslash \{\es\}$ as follows: Let $\ket{\psi_\es} = \ket\psi$, and for $x \in \cube{< n}$, given an $(n-|x|)$-qubit state $\ket{\psi_x}$, write
	\begin{equation*}
		\ket{\psi_x} =
		\begin{cases}
			\beta_{x0} \ket0 \ket{\psi_{x0}} + \beta_{x1} \ket1 \ket{\psi_{x1}} & \text{if } |x| \le n-2, \\
			\beta_{x0} \ket0 + \beta_{x1} \ket1 & \text{if } |x| = n-1
		\end{cases}
	\end{equation*}
	for $(n-|x|-1)$-qubit states $\ket{\psi_{x0}}, \ket{\psi_{x1}}$ (if $|x| \le n-2$) and complex numbers $\beta_{x0}, \beta_{x1}$ such that $\Mag{\beta_{x0}}^2 + \Mag{\beta_{x1}}^2 = 1$. Let
	\begin{equation*}
		\ket{\phi_x} = \beta_{x0} \ket0 + \beta_{x1} \ket1
	\end{equation*}
	for $x \in \cube{<n}$, and observe that $\alpha_x = \prod_{i=1}^n \beta_{x_{\le i}}$ for all $x \in \cube n$.
	
	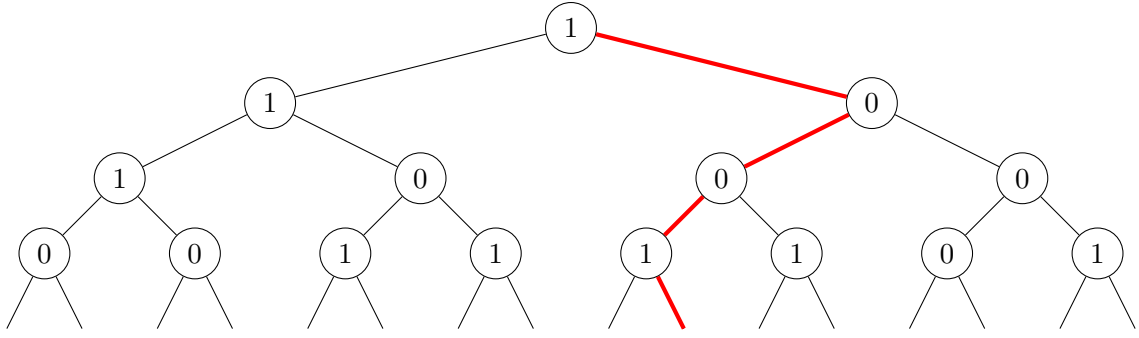
\begin{figure}
		\centering
		\begin{tikzpicture}[level distance=1cm,
			level 1/.style={sibling distance=8cm},
			level 2/.style={sibling distance=4cm},
			level 3/.style={sibling distance=2cm},
			level 4/.style={sibling distance=1cm},
			highlight/.style={ultra thick, red}]
			
			\node[circle, draw] (root) {1}
			child {
				node[circle, draw] {1}
				child {
					node[circle, draw] {1}
					child {
						node[circle, draw] {0}
						child[draw=none] {}
						child[draw=none] {}
					}
					child {
						node[circle, draw] {0}
						child[draw=none] {}
						child[draw=none] {}
					}
				}
				child {
					node[circle, draw] {0}
					child {
						node[circle, draw] {1}
						child[draw=none] {}
						child[draw=none] {}
					}
					child {
						node[circle, draw] {1}
						child[draw=none] {}
						child[draw=none] {}
					}
				}
			}
			child {
				node[circle, draw] {0}
				child {
					node[circle, draw] {0}
					child {
						node[circle, draw] {1}
						child[draw=none] {}
						child[draw=none] {}
					}
					child {
						node[circle, draw] {1}
						child[draw=none] {}
						child[draw=none] {}
					}
				}
				child {
					node[circle, draw] {0}
					child {
						node[circle, draw] {0}
						child[draw=none] {}
						child[draw=none] {}
					}
					child {
						node[circle, draw] {1}
						child[draw=none] {}
						child[draw=none] {}
					}
				}
			};
			
			\draw[highlight] (root) -- (root-2);
			\draw[highlight] (root-2) -- (root-2-1);
			\draw[highlight] (root-2-1) -- (root-2-1-1);
			\draw[highlight] (root-2-1-1) -- (root-2-1-1-2);
		\end{tikzpicture}
		\caption{Nodes are labeled with the inputs to $f$. The highlighted path represents the output of $f$, and is defined by starting at the root and repeatedly walking to the left or right child depending on whether the current node is labeled 0 or 1.}
		\label{fig:tree}
	\end{figure}
	
	Let $f : \cube{\cube{<n}} \to \cube n$ be the function defined by $f(x)_i = x_{f(x)_{<i}}$ for $i$ from 1 to $n$. The function $f$ is illustrated in \cref{fig:tree} and can be computed as follows:
	\begin{equation*}
		f(x)_j = \bigvee_{\substack{t \in \cube j \\ t_j = 1}} \bigwedge_{1 \le i \le j} \Ind{x_{t_{<i}} = t_i} \qquad \text{for $1 \le j \le n$}.
	\end{equation*}
	(The conjunction indicates whether $t$ equals the first $j$ bits of $f(x)$, and the disjunction indicates whether the satisfying $t$ is such that $t_j = 1$.) Each of the $n$ \acz formulas has $\sum_{j=1}^n 2^{j-1} j \le \ot{2^n}$ leafs, so a \qaczf circuit on $\ot{2^n}$ qubits can compute the unitary
	\begin{equation*}
		U_f \ket{x, a} = \ket{x, a \oplus f(x)} \qquad \text{for $x \in \cube{\cube{<n}}, a \in \cube{n}$}
	\end{equation*}
	by first fanning out the input bits to make as many copies as needed, then simulating the gates in the \acz formulas, and finally uncomputing the garbage.
	
	Let $(\reg R_x)_{x \in \cube{<n}}$ be one-qubit registers and let $\reg S$ be an $n$-qubit register. The first step toward constructing $\ket\psi$ is to construct the state
	\begin{equation*}
		U_f \left(\quad \bigotimes_{\mathclap{x \in \cube{<n}}} \ket{\phi_x}_{\reg R_x} \otimes \ket{0^n}_{\reg S} \right),
	\end{equation*}
	using a layer of one-qubit gates followed by the aforementioned circuit for $U_f$. Here, when computing $U_f$, the $x$-th input bit to $f$ is in $\reg R_x$ for all $x$, and the output register of $f$ is $\reg S$. Observe that
	\begin{align*}
		U_f (I \otimes \ket{0^n}) = \sum_{\mathclap{x \in \cube{\cube{<n}}}} \kb{x} \otimes \ket{f(x)}
		= \sum_{\mathclap{t \in \cube n}} \,\,\, \left(\quad \sum_{\mathclap{x \in f^{-1}(t)}} \kb{x} \right) \otimes \ket{t}
		= \sum_{\mathclap{t \in \cube n}} \,\,\, \left( \bigotimes_{i=1}^n \kb{t_i}_{\reg R_{t_{<i}}} \right) \otimes \ket{t}_{\reg S},
	\end{align*}
	where the $t$-th tensor product above implicitly acts as the identity on all $\reg R_x$ for which $x$ does not equal $t_{<i}$ for any $i$. Therefore
	\begin{equation*}
		U_f \left(\quad \bigotimes_{\mathclap{x \in \cube{<n}}} \ket{\phi_x}_{\reg R_x} \otimes \ket{0^n}_{\reg S} \right)
		= \sum_{t \in \cube n} \bigotimes_{x \in \cube{<n}} \begin{cases} \ket{t_i} \ip{t_i}{\phi_{t_{<i}}}_{\reg R_x} & \text{if $x = t_{<i}$ for some $i$} \\ \ket{\phi_x}_{\reg R_x} & \text{otherwise} \end{cases} \otimes \ket{t}_{\reg S}.
	\end{equation*}
	By the definition of $\ket{\phi_{t_{<i}}}$ it holds that $\ip {t_i} {\phi_{t_{<i}}} = \beta_{t_{<i} t_i} = \beta_{t_{\le i}}$, so since $\alpha_t = \prod_{i=1}^n \beta_{t_{\le i}}$ for all $t \in \cube n$ it follows that
	\begin{equation*}
		U_f \left(\quad \bigotimes_{\mathclap{x \in \cube{<n}}} \ket{\phi_x}_{\reg R_x} \otimes \ket{0^n}_{\reg S} \right)
		= \sum_{t \in \cube n} \alpha_t \bigotimes_{x \in \cube{<n}} \begin{cases} \ket{t_i}_{\reg R_x} & \text{if $x = t_{<i}$ for some $i$} \\ \ket{\phi_x}_{\reg R_x} & \text{otherwise} \end{cases} \otimes \ket{t}_{\reg S}.
	\end{equation*}
	
	All that remains to construct the state $\ket\psi = \sum_{t \in \cube n} \alpha_t \ket t$ is to uncompute the above content of $(\reg R_x)_{x \in \cube{<n}}$ controlled on the state $\ket t$ of $\reg S$. To do so, first make $\left| \cube{<n} \right|$ copies of $t$ using fanout. Then for each $x \in \cube{<n}$ in parallel, controlled on one of these copies of $t$, if $x = t_{<i}$ for some $i$ then perform in $\reg R_x$ an operation that maps $\ket{t_i}$ to $\ket0$, and otherwise perform in $\reg R_x$ an operation that maps $\ket{\phi_x}$ to $\ket 0$. Finally, uncompute the extra copies of $t$ using fanout.
\end{proof}

\subsection{Unitaries} \label{s:u-qacf}

First we establish some basic properties of \qacf circuits:

\begin{lem} \label{lem:qaczf-swap}
	There is a uniform family of $O\Paren{m n \log n}$-qubit \qaczf circuits $(C_{n,m})_{n,m}$, where $C_{n,m}$ takes as input a $(\log n)$-qubit register $\reg K$ and $m$-qubit registers $\reg A_0, \dotsc, \reg A_{n-1}, \reg B$ (and ancillae) and $C_{n,m}$ swaps $\reg A_k$ and $\reg B$ controlled on the classical state $\ket{k}_{\reg K}$.
\end{lem}
\begin{proof}
	We can assume without loss of generality that $m=1$, because then the general case follows by swapping the $i$-th qubits of $\reg A_k$ and $\reg B$ for all $i$ in parallel. By linearity we may assume that the input is a standard basis state
	\begin{equation*}
		\ket{k}_{\reg K} \ket{x_0}_{\reg A_0} \dotsb \ket{x_{n-1}}_{\reg A_{n-1}} \ket{y}_{\reg B}.
	\end{equation*}
	
	For now assume that $y$ is promised to be $0$. First compute $x_k = \bigvee_{j=0}^{n-1} \Paren{\Ind{j=k} \wedge x_j}$ in $\reg B$, using that \qaczf circuits can simulate \acz circuits; note that comparing $j$ and $k$ requires $O(\log n)$ qubits for any given value of $j$. Then controlled on the state $\ket{x_k}_{\reg B}$, make $n$ copies of $x_k$ with fanout, use them to XOR the bit $\Ind{j=k} \wedge x_k$ into $\reg A_j$ for all $j < n$ in parallel, and finally uncompute the extra copies of $x_k$ using fanout.
	
	For the general case where $y$ might not be $0$, let $\reg C$ be a one-qubit register in the ancillae. First swap $\reg A_k$ and $\reg C$ as described above, then swap $\reg B$ and $\reg A_k$ as described above, and finally swap $\reg C$ and $\reg B$.
\end{proof}

\begin{lem} \label{lem:cqacz}
	If $C$ is an $n$-qubit, size-$s$, depth-$d$ \qacf circuit then controlled-$C$ can be implemented by an $O(n)$-qubit, size-$O(s)$, depth-$O(d)$ \qacf circuit.
\end{lem}
\begin{proof}
	Controlled on a bit $b \in \bits$, each gate in a \qacf circuit can be implemented controlled on $b$ as follows. A $k$-qubit generalized Toffoli gate controlled on $b$ is equivalent to a $(k+1)$-qubit generalized Toffoli gate, fanning out a bit $c$ controlled on $b$ is equivalent to fanning out $bc$, and applying a one-qubit gate controlled on $b$ is a two-qubit operation and can therefore be implemented with a constant number of one-qubit and CNOT gates~\cite[Section 4.5.2]{NC10}. The result follows by making $n$ copies of $b$, and using these copies to implement all gates in a given layer of $C$ in parallel controlled on $b$, where the same ancillae are reused in simulations of successive layers of $C$.
\end{proof}

Now we prove that $\oo{2^{n/2}}$-depth \qacf circuits can implement any $n$-qubit unitary:

\begin{thm} \label{thm:unit-qacf}
	For all $n$-qubit unitaries $U$ there exists an $\oo{2^{n/2}}$-depth, $\ot{2^{2n}}$-qubit \qacf circuit $C$ such that $C(I_n \otimes \ket\zs) = U \otimes \ket\zs$.
\end{thm}
\begin{proof}
	By \cref{main} it suffices to implement a $U$-CC with an $\ot{2^{2n}}$-qubit \qaczf circuit, and this can be achieved as follows. On input $x \in \cube n$ to the $U$-CC, first make $2^n$ copies of $x$ using fanout. Then for all $y \in \cube n$ in parallel, in a register $\reg R_y$ use \cref{thm:qacf-formal,lem:cqacz} to construct $U \ket y$ controlled on $x = y$, using a different copy of $x$ for each string $y$. Then swap $\reg R_x$ into the output register using \cref{lem:qaczf-swap}, and use fanout to uncompute the extra copies of $x$.
\end{proof}

\cref{thm:unit-qacf,lem:qsim} imply the following:

\begin{cor}[formal version of \cref{depth}]
	For all $n$-qubit unitaries $U$ there exists an $\ot{2^{n/2}}$-depth, $\ot{2^{2n}}$-qubit \qnc circuit $C$ such that $C(I_n \otimes \ket\zs) = U \otimes \ket\zs$.
\end{cor}

\section*{Acknowledgments}

Thanks to Scott Aaronson, Karen J.\ Morenz Korol, Fermi Ma, Adrian She, Nathan Wiebe, and Henry Yuen for helpful discussions. Part of this work was done while the author was visiting the Simons Institute for the Theory of Computing.

\appendix

\section{\texorpdfstring{\qnc}{QNC} simulation of \texorpdfstring{\qacf}{QACf} circuits} \label{s:qnc}

\begin{lem} \label{lem:qsim}
	For all $n$-qubit, depth-$d$ \qacf circuits $U$, there exists an $O(n)$-qubit, depth-$O(d \log n)$, size-$O(dn)$ \qnc circuit $C$ such that $C(I_n \otimes \ket\zs) = U \otimes \ket\zs$.
\end{lem}

\begin{proof}
	Green et al.~\cite{Gre+02} observed that the transformation $\ket{b, 0^{n-1}} \mapsto \ket{b^n}$ for $b \in \bits$ can be implemented by a size-$(n-1)$, depth-$\ceil{\log n}$ circuit consisting of CNOT gates with no ancillae. Therefore the fanout transformation $\ket{b, x} \mapsto \ket{b, x \oplus b^n}$ for $b \in \bits, x \in \cube n$ can be implemented by first computing $b^n$ as described above, then XORing $b^n$ onto $x$, and finally uncomputing $b^n$.
	
	Similarly an $n$-qubit generalized Toffoli gate can be cleanly simulated by a size-$O(n)$, depth-$O(\log n)$ \qnc circuit with $O(n)$ ancillae. This follows by simulating a log-depth DeMorgan formula for the AND function (i.e.\ the circuit whose graph is a balanced binary tree of 2-bit AND gates), with one ancilla qubit allocated to store the value of each gate in the DeMorgan formula, and then uncomputing the garbage.
	
	A general $n$-qubit, depth-1 \qacf circuit can be written as $\bigotimes_j G_j$, where each $G_j$ is a $k_j$-qubit gate such that $\sum_j k_j \le n$, and if $k_j>1$ then $G_j$ is either a generalized Toffoli or fanout gate. It follows that $\bigotimes_j G_j$ can be cleanly simulated by a \qnc circuit where the size and number of ancillae are $O\Paren{\sum_j k_j} \le O(n)$ and the depth is $O\Paren{\max_j \log k_j} \le O\Paren{\log \sum_j k_j} \le O(\log n)$. The lemma follows by successively implementing each layer of a \qacf circuit in this way, reusing the same ancillae to simulate each layer.
\end{proof}

\bibliographystyle{quantum}
\bibliography{Query_and_Depth_Upper_Bounds_for_Quantum_Unitaries_via_Grover_Search}

\end{document}